\documentclass{article}%
\usepackage{amsfonts}
\usepackage{amsmath}
\usepackage{amssymb}
\usepackage{graphicx}%
\setcounter{MaxMatrixCols}{30}
\providecommand{\U}[1]{\protect\rule{.1in}{.1in}}
\newtheorem{theorem}{Theorem}
\newtheorem{acknowledgement}[theorem]{Acknowledgement}

\newtheorem{condition}[theorem]{Condition}

\newtheorem{lemma}[theorem]{Lemma}

\newtheorem{proposition}[theorem]{Proposition}
\newtheorem{remark}[theorem]{Remark}

\newenvironment{proof}[1][Proof]{\noindent\textbf{#1.} }{\ \rule{0.5em}{0.5em}}
\begin{document}

\title{Generalized R\'{e}nyi Entropy and Structure Detection of Complex Dynamical Systems}
\author{Gy\"{o}rgy STEINBRECHER\\Physics Department, Faculty of Science, University of Craiova,\\A. I. Cuza 13, 200585 Craiova, Romania \\Email: gyorgy.steinbrecher@gmail.com
\and Giorgio SONNINO\\Universit{\'{e}} Libre de Bruxelles (ULB), Department of Physics \\Bvd du Triomphe, Campus de la Plaine CP 231\\1050 Brussels, Belgium. Email: gsonnino@ulb.ac.be\\Royal Military School (RMS), Av. de la Renaissance 30, \\1000 Brussels, Belgium}
\maketitle

\begin{abstract}
We study the problem of detecting the structure of a complex dynamical system
described by a set of deterministic differential equation that contains a
Hamiltonian subsystem, without any information on the explicit form of
evolution laws. We suppose that initial conditions are random and the initial
conditions of the Hamiltonian subsystem are independent from the initial
conditions of the rest of the system. The single numerical information is the
probability density function of the system at one or several, finite number of
time instants. In the framework of the formalism of the generalized R\'{e}nyi
entropy we find necessary and sufficient conditions that the back reaction of
the Hamiltonian subsystem to the rest of the system is negligible.The results
can be easily generalized to the case of general, measure preserving subsystem.

\end{abstract}

\section{ Introduction}

Similar generalizations of the classical Shannon entropy \cite{shannon}
appeared independently both in mathematical \cite{Renyi1}, \cite{RenyiWT},
\cite{RenyiInfoAccumulation} as well as in the physical literature
\cite{Tsallis1}, \cite{Tsallis2}, \cite{TsallisBook}, \cite{TsallisGelMann}.
These generalizations contains the same functional \cite{SonninoSteinbrGRE},
that can be related to the Lebesgue space norm \cite{Rudin}, \cite{ReedSimon},
associated to the measure space in which the probability is defined. Observe
that both generalizations, by Alfred R\'{e}nyi and Constantino Tsallis, use
the minimal mathematical prerequisite necessary to define the generalized
entropy: the structure of measure space, and for instance no differentiable,
or further algebraic structures are assumed. Due to the similarity in their
definition, both formulation generate the same probability density function
(PDF), when maximal (generalized) entropy principle is used with constrained
optimization. This class of generalized entropies allows to study the case of
singular, normalizable PDF's, when the classical Shannon entropy is infinite
\cite{SonninoSteinbrGRE}. The naturalness of the R\'{e}nyi and Tsallis
entropies, from the point of view of the category theory was proven in
\cite{SGySASGCategory}. In the term physicists, this means that the functional
$N_{q}^{(1)}$ (see below) that appears both in the definition of the R\'{e}nyi
and Tsallis entropy, has nice properties. First it is multiplicative, in the
case of composed system whose components are not correlated, property that
translates in additivity of the R\'{e}nyi entropy (RE). Secondly, \ the
functional $N_{q}^{(1)}$is additive, in the case of composed system obtained
by the measure theoretic construction known under name "direct sum",
construction that appears in simplest case in \cite{Renyi1}. In the case \ of
PDF depending on many variables, it is possible to extend the previous
generalizations of the entropy, by using only one new fact, that remains in
the framework of the formalism of measure spaces: the product structure of the
measure space \cite{SonninoSteinbrGRE} . This new generalization extends the
geometrical interpretation of the RE. In this formalism the generalized
R\'{e}nyi entropy (GRE) is related to the norm ( generalized distance) of a
class of Banach spaces, a class of Lebesgue functional spaces with highly
anisotropic norm \cite{Besov}. It was proven that GRE has finite value also in
the case of large class of PDF's , whose Shannon or R\'{e}nyi entropies are
infinite \cite{SonninoSteinbrGRE}. The GRE\ is additive
\cite{SonninoSteinbrGRE} and the corresponding functional $N_{p,q}^{(2)}$ has
similar nice category theoretic properties as the functional $N_{q}^{(1)}$
\cite{SGySASGCategory}. In the study of dynamical systems (DS) driven by
external noise, modelling the anomalous transport in plasma \cite{Balescu1},
\cite{balescu3}, the GRE play the role of Liapunov functional
\cite{SonninoSteinbrGRE}. In this work we expose a new application of the GRE.

\ There are many situation when we have little information on physical object,
for instance in astrophysics. Despite we known that part of the system is
described by (classical, not quantum) Hamiltonians, or more generally, measure
preserving dynamics, there are interactions \ that are inaccessible to
observations. The situation under study is a complex dynamical system (DS)
$\Omega$, in a finite dimensional phase space and whose dynamics is known that
it is described by a set of ordinary differential equations. The \ DS $\Omega$
contains the interacting subsystems, $\Omega^{\prime}$ \ interacting with the
Hamiltonian subsystem $\Omega^{\prime\prime}$. \ The randomness, related to
the entropies, appears due to random initial conditions, and we suppose that
the initial conditions of the subsystems $\Omega^{\prime}$ and $\Omega
^{\prime\prime}$ are independent. In the our formalism the exact form of the
differential equations, of the Hamiltonian is not known. The accessible
information is the probability density function on the phase space of $\Omega
$, measured at a single or several, finite time instants. The exact
formulation is contained in Subsection \ref{markerSubsectTheDynamicalSystem}.
In Subsection \ref{markerSubsectionRenyiAndGeneralizedRenyi} we recall the
definition of the GRE and formulate the main result contained in the
Propositions \ref{markerPropositionInvariance},
\ref{markerPropositionSufficientCondition}, that means by computing the RE and
GRE, it is possible to decide about the existence of back reaction of the
Hamiltonian subsystem $\Omega^{\prime\prime}$ to the subsystem $\Omega
^{\prime}$. In the language of the ergodic theory, the fact that the subsystem
$\Omega^{\prime\prime}$ is driven by the subsystem $\Omega^{\prime}$, is
expressed by the statement: The DS $\Omega$ is the skew product of the
dynamical systems $\Omega^{\prime}$ with~$\Omega^{\prime\prime}$. The proof of
the results are \ contained in Section \ref{markersectionProofs}. The
applicability of the previous result to the discrete time dynamical systems,
when the continuos time differential equations are replaced by finite
difference recursion equations is \ treated in the Section
\ref{markerSectionDiscreteTimeDinamicalSystem}. More mathematical details are
in the Appendix. We warn the reader that the proof of the Lemma
\ref{markerLemmaRearangement} from the Appendix contains unproved, heuristic
assumptions. Its absolutely rigorous proof (the proof the convergence of the
finite dimensional approximations) requires a more elaborate mathematical
framework and restrictions, that is the subject of future studies.

\section{Statement of the problem and main result
\label{markerSectionStamentofthe problem}}

\subsection{Description of the dynamical system and main assumptions
\label{markerSubsectTheDynamicalSystem}}

Consider a composed physical system, defined in a phase space $\Omega$ , with
its subsystems $\Omega^{\prime}$ \ and $\Omega^{\prime\prime}$ \ \ with their
measure structures ,$(\Omega^{\prime},\mathcal{A}_{1},m_{1})$ , $(\Omega
^{\prime\prime},\mathcal{A}_{2},m_{2})$ , $\mathcal{A}_{1},~\mathcal{A}_{2}$
are the corresponding $\sigma-$algebras and $m_{1},~m_{2}$ are the
corresponding measures. \ We develop a general formalism, that also include
the case when the subsystem $\ \Omega^{\prime\prime}$ is Hamiltonian. The
global phase space is the direct product \ $\Omega=\Omega^{\prime}\times$
$\Omega^{\prime\prime}$\ , \ and its associated standard direct product
measure space structure \ $(\Omega^{\prime}\times\Omega^{\prime\prime
},\mathcal{A}_{1}\otimes\mathcal{A}_{2},m_{1}\otimes m_{2})$ . \ A typical
example is \ the case when $\Omega^{\prime}=\mathbb{R}^{N_{1}}$,
$\Omega^{\prime\prime}=\mathbb{R}^{N_{2}}$ , the $\sigma-$algebras
$\mathcal{A}_{1},$ $\mathcal{A}_{2}$ are generated by open subsets from
$\mathbb{R}^{N_{1}}$ respectively $\mathbb{R}^{N_{2}}$ and the measures
$m_{1}$ , $m_{2}$ are of the form%
\begin{align}
dm_{1}(\mathbf{x})  &  =\gamma_{1}(\mathbf{x})\prod\limits_{k=1}^{N_{1}}%
dx_{k}\label{10}\\
dm_{2}(\mathbf{y})  &  =\gamma_{2}(\mathbf{y})\prod\limits_{k=1}^{N_{2}}dy_{k}
\label{20}%
\end{align}
We consider the case when $\Omega^{\prime\prime}$ is isomorphic to
$\mathbb{R}^{N_{2}}$. Without loss of generality we can select a coordinate
system such that $\gamma_{2}(\mathbf{y})\equiv1$ and our result depends on
this selection. A more general case is when $\ \Omega^{\prime}$ is an
orientable differential manifold of dimension $N_{1}$ , and in this case
Eq.(\ref{10}) \ is expressed in some local coordinate system. \ Consider now
an evolution law on \ $\Omega^{\prime}\times$ $\Omega^{\prime\prime}$, that in
a local coordinate systems is of the form
\begin{align}
\frac{dx_{k}(t)}{dt}  &  =U_{k}(t,\mathbf{x});~k=\overline{1,N_{1}}%
\label{30}\\
\frac{dy_{j}(t)}{dt}  &  =V_{j}(t,\mathbf{y},\mathbf{x});~j=\overline{1,N_{2}}
\label{40}%
\end{align}
Consider the time dependent dependent vector field \ on $\Omega^{\prime\prime
}$%
\begin{equation}
\overline{V_{j}}^{(\mathbf{f})}(t,\mathbf{y}):=V_{j}(t,\mathbf{y}%
,\mathbf{f}(t));~j=\overline{1,N_{2}} \label{50}%
\end{equation}
where $\mathbf{f}(t)\mathbf{:=}\left(  f_{1}(t),\ldots,f_{N_{1}}(t)\right)  $
and the associated differential equation
\begin{equation}
\frac{dy_{j}(t)}{dt}=\overline{V_{j}}^{(\mathbf{f})}(t,\mathbf{y}%
,\mathbf{f}(t));~j=\overline{1,N_{2}} \label{51}%
\end{equation}
It generate an evolution map that preserve the measure $m_{2}$, irrespective
on the function $\mathbf{f}(t)\mathbf{:=}\left(  f_{1}(t),\ldots,f_{N_{1}%
}(t)\right)  $, iff:
\begin{equation}
\sum\limits_{j=1}^{N_{2}}\frac{\partial V_{j}(t,\mathbf{y},\mathbf{x}%
)}{\partial y_{j}}~=0 \label{60}%
\end{equation}
Observe that the important case of the Hamiltonian system \ in canonical
variables is recovered when $N_{2}=2d$, $\mathbf{y}=(\mathbf{q},\mathbf{p})$
and%
\begin{align*}
V_{j}(t,\mathbf{y},\mathbf{x})  &  =\frac{\partial}{\partial y_{d+j}}H\left(
t,\mathbf{y},\mathbf{x}\right)  ;~j=\overline{1,d}\\
V_{d+j}(t,\mathbf{y},\mathbf{x})  &  =-\frac{\partial}{\partial y_{j}}H\left(
t,\mathbf{y},\mathbf{x}\right)  ;~j=\overline{1,d}%
\end{align*}

Consider the situation when only the probability density function (PDF)\ of
the initial conditions associated to the Eqs.(\ref{30}, \ref{40}) \ is known.
In this case of the random initial conditions the evolution of the joint PDF
$\rho(t,\mathbf{x},\mathbf{y})$ is given by the following continuity equation
\begin{equation}
\frac{\partial\rho(t,\mathbf{x},\mathbf{y})}{\partial t}+\frac{1}{\gamma
_{1}(\mathbf{x})}\sum\limits_{k=1}^{N_{1}}\frac{\partial}{\partial x_{k}%
}\left[  \rho\gamma_{1}U_{k}\right]  +\sum\limits_{j=1}^{N_{2}}\frac{\partial
}{\partial y_{j}}\left[  \rho V_{j}\right]  =0 \label{70}%
\end{equation}
\ On the other hand the evolution of the subsystem $\Omega^{\prime}$ can be
studied independently. The evolution of the PDF in the phase space
$\Omega^{\prime}$ is described by the following continuity equation%
\begin{equation}
\frac{\partial\rho_{1}(t,\mathbf{x})}{\partial t}+\frac{1}{\gamma
_{1}(\mathbf{x})}\sum\limits_{k=1}^{N_{1}}\frac{\partial}{\partial x_{k}%
}\left[  \rho_{1}\gamma_{1}U_{k}\right]  =0 \label{90}%
\end{equation}

Our main assumption is the following

\begin{condition}
\label{markerConditionIndepInitCondi}The distribution of the random initial
conditions $\mathbf{x}(0)$, $\mathbf{y}(0)$ \ \ for Eqs.(\ref{30}, \ref{40})
are independent, \ that \ can be reformulated in the following condition on
the initial conditions for the solution of Eq.(\ref{70}), in term of the
solution $\rho_{1}(t,\mathbf{x})$ of Eq.(\ref{90})
\begin{equation}
\rho(0,\mathbf{x},\mathbf{y})=\rho_{1}(0,\mathbf{x})\rho_{2}(\mathbf{y})
\label{100}%
\end{equation}

\end{condition}

We can impose the normalization
\begin{equation}
\rho_{2}(\mathbf{y})=\int\limits_{\Omega_{1}}dm_{1}(\mathbf{x})\rho
(0,\mathbf{x},\mathbf{y}) \label{normalization}%
\end{equation}

\subsection{The R\'{e}nyi entropy, its generalization and the main
results\label{markerSubsectionRenyiAndGeneralizedRenyi}}

Starting from the reinterpretation of the RE in term of distance in Lebesgue
functional space, a generalization was introduced that preserve the additivity
in the case of composed system without correlation. In the our formalism
\cite{SonninoSteinbrGRE} the RE, $S_{R,q}$ associated to the subsystem
$\Omega^{\prime}$, described by the time dependent PDF $\rho_{1}%
(t,\mathbf{x})$ from Eq.(\ref{90}) is given by
\begin{equation}
S_{R,q}(t;\rho_{1}):=\frac{1}{1-q}\log N_{q}^{(1)}(t;\rho_{1}) \label{110}%
\end{equation}
where we denoted
\begin{equation}
N_{q}^{(1)}(t;\rho_{1}):=\int\limits_{\Omega^{\prime}}\left[  \rho
_{1}(t,\mathbf{x})\right]  ^{q}dm_{1}(\mathbf{x}) \label{120}%
\end{equation}
In the limit $q\rightarrow1$ the RE is equal with the Shannon-Boltzmann
entropy $S_{clasic}=-\int\limits_{\Omega^{\prime}}\rho_{1}(t,\mathbf{x}%
)\log\left[  \rho_{1}(t,\mathbf{x})\right]  dm_{1}(\mathbf{x})$ \ 

The solutions of Eqs.(\ref{70}, \ref{90}) are related by
\begin{equation}
\rho_{1}(t,\mathbf{x})=\int\limits_{\Omega^{\prime\prime}}\rho(t,\mathbf{x}%
,\mathbf{y})dm_{2}(\mathbf{y}) \label{130}%
\end{equation}
The version of interest of GRE, associated to the solution of Eq.(\ref{70}) is
given by
\begin{equation}
S_{p,q}(t;\rho):=\frac{1}{1-q}\log N_{p,q}(t;\rho) \label{140}%
\end{equation}
where we denoted%
\begin{equation}
N_{p,q}^{(2)}(t;\rho):={\int\limits_{\Omega^{\prime}}}dm_{1}(\mathbf{x)}%
\left[  {\int\limits_{\Omega^{\prime\prime}}}dm_{2}(\mathbf{y)}\left\vert
\rho(t,\mathbf{x},\mathbf{y})\right\vert ^{q}\right]  ^{p} \label{150}%
\end{equation}
In the case $p=1$ \ we obtain the RE and in the limit $q\rightarrow1$ the
limiting case of GRE, via RE, is the Shannon-Boltzmann entropy .

Define the following important functional:
\begin{equation}
I_{p,q}(t,\rho):=\left[  \frac{N_{p,q}^{(2)}(t;\rho)}{N_{pq}^{(1)}(t;\rho
_{1})}\right]  ^{1/p} \label{160}%
\end{equation}
where Eq.(\ref{130}) is assumed.\ We have the following

\begin{proposition}
\label{markerPropositionInvariance}Under the Condition
\ref{markerConditionIndepInitCondi} and previous assumptions, the functional
$I_{p,q}(t,\rho)$ associated to the solutions $\rho(t,\mathbf{x},\mathbf{y})$
and $\rho_{1}(t,\mathbf{x})$ of the Eqs.(\ref{70}, \ref{90}) has the following
invariance property:
\begin{equation}
I_{p,q}(t;\rho)\equiv{\int\limits_{\Omega^{\prime\prime}}}dm_{2}%
(\mathbf{y)}\left\vert \rho_{2}(\mathbf{y})\right\vert ^{q} \label{170}%
\end{equation}
and consequently its numerical value depends only on the initial distribution
function $\rho_{2}(\mathbf{y})$ of the measure preserving subsystem
$\Omega^{\prime\prime}$, and does not depend on the time $t,$ on the parameter
$p$ as well as on the function $\gamma_{1}(\mathbf{x})$ that define the
measure $dm_{1}(\mathbf{x})$ in Eq.(\ref{10}).
\end{proposition}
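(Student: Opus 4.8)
The plan is to exploit the skew-product structure in order to obtain an explicit factorization of $\rho(t,\mathbf{x},\mathbf{y})$ in terms of $\rho_{1}$ and $\rho_{2}$, after which the claim reduces to the measure-preservation property (\ref{60}).

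First I would apply the method of characteristics to the continuity equation (\ref{70}). Expanding the two divergence terms and invoking (\ref{60}), the contribution $\rho\sum_{j}\partial V_{j}/\partial y_{j}$ vanishes, so along a characteristic $(\mathbf{x}(t),\mathbf{y}(t))$ of the full flow one obtains
\begin{equation}
\frac{d}{dt}\,\rho\bigl(t,\mathbf{x}(t),\mathbf{y}(t)\bigr)=-\rho\,D\bigl(t,\mathbf{x}(t)\bigr),\qquad D(t,\mathbf{x}):=\frac{1}{\gamma_{1}(\mathbf{x})}\sum_{k=1}^{N_{1}}\frac{\partial(\gamma_{1}U_{k})}{\partial x_{k}}.
\end{equation}
The decisive point is that the damping rate $D$ depends on $\mathbf{x}$ alone, and that the $\mathbf{x}$-characteristics are fixed by (\ref{30}) independently of $\mathbf{y}$. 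Carrying out the identical computation on (\ref{90}) shows that $\rho_{1}$ acquires, along the projected characteristic, exactly the same integrating factor $J(t,\mathbf{x}_{0}):=\exp\bigl(-\int_{0}^{t}D(s,\mathbf{x}(s))\,ds\bigr)$, where $\mathbf{x}_{0}$ is the base initial point.

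Next I would use the factorized initial data (\ref{100}). Denoting by $\mathbf{x}_{0},\mathbf{y}_{0}$ the pre-images of $\mathbf{x},\mathbf{y}$ under the flow at time $t$, the two characteristic relations read $\rho(t,\mathbf{x},\mathbf{y})=\rho_{1}(0,\mathbf{x}_{0})\,\rho_{2}(\mathbf{y}_{0})\,J(t,\mathbf{x}_{0})$ and $\rho_{1}(t,\mathbf{x})=\rho_{1}(0,\mathbf{x}_{0})\,J(t,\mathbf{x}_{0})$. Eliminating the common factor $\rho_{1}(0,\mathbf{x}_{0})\,J(t,\mathbf{x}_{0})$ between them yields the key identity
\begin{equation}
\rho(t,\mathbf{x},\mathbf{y})=\rho_{1}(t,\mathbf{x})\,\rho_{2}\bigl(\mathbf{y}_{0}(t,\mathbf{x},\mathbf{y})\bigr),
\end{equation}
in which $\mathbf{y}_{0}$ is the $\Omega^{\prime\prime}$-coordinate transported back to time $0$ along the flow (\ref{51}) driven by $\mathbf{f}(s)=\mathbf{x}(s)$.

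Finally I would substitute this identity into (\ref{150}). The inner $\mathbf{y}$-integral factors as $|\rho_{1}(t,\mathbf{x})|^{q}\int_{\Omega^{\prime\prime}}|\rho_{2}(\mathbf{y}_{0})|^{q}\,dm_{2}(\mathbf{y})$; changing variables $\mathbf{y}\mapsto\mathbf{y}_{0}$ and using that the flow of (\ref{51}) preserves $m_{2}$ (which is precisely (\ref{60})), this integral collapses to the constant $\int_{\Omega^{\prime\prime}}|\rho_{2}(\mathbf{y})|^{q}\,dm_{2}(\mathbf{y})$, independent of both $t$ and $\mathbf{x}$. Pulling this constant out of (\ref{150}) leaves its $p$-th power multiplied by $\int_{\Omega^{\prime}}|\rho_{1}(t,\mathbf{x})|^{pq}\,dm_{1}(\mathbf{x})=N_{pq}^{(1)}(t;\rho_{1})$, so the ratio in (\ref{160}) equals that constant to the power $p$, and the $1/p$ exponent delivers (\ref{170}). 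I expect the main obstacle to lie not in this algebra but in rigorously justifying the factorization identity, namely establishing well-posedness of the characteristic flow together with the $m_{2}$-preservation and the change of variables for the back-transported $\Omega^{\prime\prime}$-flow along almost every base trajectory.
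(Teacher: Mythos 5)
Your proof is correct and follows essentially the same route as the paper: the characteristic/integrating-factor computation is just the Lagrangian phrasing of the paper's Jacobian argument (Eqs.~(\ref{210})--(\ref{240})), both yielding the key factorization $\rho(t,\mathbf{x},\mathbf{y})=\rho_{1}(t,\mathbf{x})\,\rho_{2}(\mathbf{y}_{0})$ of Eq.~(\ref{270}) before the change of variables in the inner integral. The only cosmetic difference is that the paper encodes the cancellation of the common factor through the block-triangular Jacobian of the skew-product flow rather than through the divergence $D(t,\mathbf{x})$ along characteristics.
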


\begin{remark}
\label{markerRemarkInvariance}By using Eqs.(\ref{110}, \ref{120}, \ref{140},
\ref{150}), the previous Proposition \ref{markerPropositionInvariance} can be
reformulated in the term of RE of the PDF $\rho_{1},\rho_{2}$ and GRE of the
PDF $\rho$ as follows: the functional \
\begin{equation}
\log I_{p,q}(t;\rho)=\frac{1}{p}\left[  (1-q)S_{p,q}(t;\rho)-(1-pq)S_{R,pq}%
(t;\rho_{1})\right]  \label{172}%
\end{equation}
is independent on the time $t$, parameter $p$, the choice of the measure
$dm_{1}(\mathbf{x})$ and has the constant value
\begin{equation}
\log I_{p,q}(t;\rho)=(1-q)S_{R,q}(\rho_{2}) \label{173}%
\end{equation}

\end{remark}

The previous Proposition \ref{markerPropositionInvariance} or its equivalent
formulation Remark \ref{markerRemarkInvariance}, give necessary condition for
the absence of back reaction of the Hamiltonian subsystem $\Omega
^{\prime\prime}$ to the subsystem $\Omega^{\prime}$. In the following we
formulate a partial result in the reverse direction: by assuming the
invariance of $I_{p,q}$ we obtain a result about the structure of PDF similar
to Eq.(\ref{270}), a structure obtained assuming that there is no back
reaction of the \ Hamiltonian subsystem $\Omega^{\prime\prime}$ to
$\Omega^{\prime}$.

If the Condition \ref{markerConditionIndepInitCondi} with Eq.(\ref{100}) are
fulfilled, then we have the following result

\begin{proposition}
\label{markerPropositionSufficientCondition} Suppose that the functional
$I_{p,q}(t,\rho)$ from Eq.(\ref{170}) is independent on the time $t$,
parameter $p$ and the choice on the function $\gamma_{1}(\mathbf{x})$ that
define the measure $dm_{1}(\mathbf{x})$. Then there exists an map
$(t,\mathbf{x},\mathbf{y})\rightarrow T(t,\mathbf{x},\mathbf{y}),$ such that
for fixed $t,\mathbf{x}$ the map $\mathbf{y}\rightarrow T(t,\mathbf{x}%
,\mathbf{y})$ preserves the Lebesgue measure $dm_{2}(\mathbf{y})$ and we have
similar to Eq.(\ref{270})
\begin{equation}
\rho(t,\mathbf{x},\mathbf{y})=\rho_{1}\left(  t,\mathbf{x}\right)  \rho
_{2}\left(  T(t,\mathbf{x},\mathbf{y})\right)  \label{174}%
\end{equation}

\end{proposition}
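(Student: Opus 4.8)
The plan is to run the computation behind Proposition \ref{markerPropositionInvariance} in reverse, treating the freedom in $\gamma_{1}$ and $p$ as a family of test conditions that pin down the $\mathbf{y}$-profile of $\rho$. Write $g(t,\mathbf{x}):=\int_{\Omega^{\prime\prime}}|\rho(t,\mathbf{x},\mathbf{y})|^{q}\,dm_{2}(\mathbf{y})$, so that $N_{p,q}^{(2)}=\int_{\Omega^{\prime}}g^{p}\,dm_{1}$ by Eq.(\ref{150}). The first step is to record how each ingredient transforms under a replacement $\gamma_{1}\mapsto\gamma_{1}/h$ with $h>0$ arbitrary: since $\rho$ and $\rho_{1}$ are densities against $dm_{1}\otimes dm_{2}$ and $dm_{1}$ respectively, they scale as $\rho\mapsto h\rho$ and $\rho_{1}\mapsto h\rho_{1}$, while $dm_{1}\mapsto dm_{1}/h$ and hence $g\mapsto h^{q}g$. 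Consequently $N_{p,q}^{(2)}\mapsto\int h^{pq-1}g^{p}\,dm_{1}$ and $N_{pq}^{(1)}\mapsto\int h^{pq-1}\rho_{1}^{pq}\,dm_{1}$.

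Second, I would feed these into the hypothesis. Invariance of $I_{p,q}$ under every admissible $h$ forces $\int h^{pq-1}\big(g^{p}-C\rho_{1}^{pq}\big)\,dm_{1}=0$, with $C:=N_{p,q}^{(2)}/N_{pq}^{(1)}$ the untransformed ratio; since $h^{pq-1}$ exhausts the positive measurable functions (for $pq\neq1$), testing against all of them forces the bracket to vanish $m_{1}$-a.e., giving $g=C^{1/p}\rho_{1}^{q}$ pointwise. Because $g$ is manifestly independent of $p$, the $p$-independence of $I_{p,q}$ makes $C^{1/p}$ a quantity $c(t,q)$ free of $p$, and its $t$-independence makes it a constant $c(q)$. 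Evaluating at $t=0$ with the factorized initial data Eq.(\ref{100}) gives $g(0,\mathbf{x})=\rho_{1}(0,\mathbf{x})^{q}\int_{\Omega^{\prime\prime}}|\rho_{2}|^{q}\,dm_{2}$, whence $c(q)=\int_{\Omega^{\prime\prime}}|\rho_{2}|^{q}\,dm_{2}$. Carrying this out for every $q$ yields the key identity: setting $\sigma_{t,\mathbf{x}}:=\rho(t,\mathbf{x},\cdot)/\rho_{1}(t,\mathbf{x})$ on the set $\rho_{1}>0$, we have $\int_{\Omega^{\prime\prime}}|\sigma_{t,\mathbf{x}}|^{q}\,dm_{2}=\int_{\Omega^{\prime\prime}}|\rho_{2}|^{q}\,dm_{2}$ for all $q$, while Eq.(\ref{130}) gives $\int_{\Omega^{\prime\prime}}\sigma_{t,\mathbf{x}}\,dm_{2}=1$.

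Third, I would convert equality of all $L^{q}(m_{2})$-norms into equimeasurability. Since $\sigma_{t,\mathbf{x}}$ and $\rho_{2}$ are nonnegative, have equal mass, and share all power moments $\int(\cdot)^{q}\,dm_{2}$, their distribution functions $\lambda\mapsto m_{2}(\{\sigma_{t,\mathbf{x}}>\lambda\})$ and $\lambda\mapsto m_{2}(\{\rho_{2}>\lambda\})$ must coincide, via the layer-cake identity $\int\phi^{q}\,dm=\int_{0}^{\infty}q\lambda^{q-1}m(\{\phi>\lambda\})\,d\lambda$ and uniqueness for the associated moment/Mellin problem. Thus $\sigma_{t,\mathbf{x}}$ and $\rho_{2}$ are \emph{equimeasurable} for each fixed $(t,\mathbf{x})$.

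Finally, equimeasurability is exactly the input of the rearrangement Lemma \ref{markerLemmaRearangement}: on the nonatomic standard space $(\Omega^{\prime\prime},\mathcal{A}_{2},m_{2})$ two equimeasurable functions differ by a measure-preserving transformation, so there is $T(t,\mathbf{x},\cdot)$ preserving $dm_{2}$ with $\sigma_{t,\mathbf{x}}=\rho_{2}\circ T(t,\mathbf{x},\cdot)$, and multiplying by $\rho_{1}(t,\mathbf{x})$ yields Eq.(\ref{174}). I expect the genuine obstacles to sit in this last step rather than in the algebra. First, a single exponent $q$ is demonstrably insufficient, so one must genuinely exploit the whole family in $q$ to secure equimeasurability. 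Second, and more seriously, one must produce $T$ as a jointly measurable function of $(t,\mathbf{x},\mathbf{y})$ and control its construction on a general measure space; this is precisely the convergence-of-finite-dimensional-approximations issue flagged as heuristic in Lemma \ref{markerLemmaRearangement}. The degenerate locus $\rho_{1}=0$ and the excluded value $pq=1$ require a separate but routine treatment.
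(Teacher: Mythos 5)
Your proposal follows essentially the same route as the paper's own proof: derive the pointwise identity $\int_{\Omega^{\prime\prime}}\left\vert \rho(t,\mathbf{x},\mathbf{y})\right\vert ^{q}dm_{2}(\mathbf{y})=\left[\rho_{1}(t,\mathbf{x})\right]^{q}\int_{\Omega^{\prime\prime}}\left\vert \rho_{2}(\mathbf{y})\right\vert ^{q}dm_{2}(\mathbf{y})$ from the $t$-, $p$- and $\gamma_{1}$-independence of $I_{p,q}$ (fixing the constant at $t=0$ via Condition \ref{markerConditionIndepInitCondi}), and then invoke Lemma \ref{markerLemmaRearangement} with $F=\rho(t,\mathbf{x},\cdot)$ and $G=\rho_{1}(t,\mathbf{x})\rho_{2}(\cdot)$ to produce the measure-preserving map $T$. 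Your explicit scaling argument in $h$ supplies the justification the paper compresses into ``due to the independence on the measure $dm_{1}$,'' and your equimeasurability reformulation is a cleaner packaging of the same rearrangement step, whose heuristic character (and the need for the full family of exponents $q$) you correctly identify as the genuine remaining gap.
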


\begin{remark}
\label{markerRemarkNoBackreaction}Observe that in the case of back reaction,
or equivalently, the case of completely coupled dynamical systems, the
evolution from $t=t_{1}$ to $t=0$ has the form $(\mathbf{x},\mathbf{y}%
)\rightarrow(h_{1}(t_{1},\mathbf{x},\mathbf{y}),h_{2}(t_{1},\mathbf{x}%
,\mathbf{y}))$ and the evolution of the PDF \ is more complicated, compared to
Eq.(\ref{174}):%
\begin{align}
\rho(t_{1},\mathbf{x},\mathbf{y})  &  =\rho(0,h_{1}(t_{1},\mathbf{x}%
,\mathbf{y}),h_{2}(t_{1},\mathbf{x},\mathbf{y}))K(t_{1},\mathbf{x}%
,\mathbf{y})=\label{175}\\
&  \rho_{1}(0,h_{1}(t_{1},\mathbf{x},\mathbf{y}))\rho_{2}(h_{2}(t_{1}%
,\mathbf{x},\mathbf{y}))K(t_{1},\mathbf{x},\mathbf{y})
\end{align}
where%
\[
K(t_{1},\mathbf{x},\mathbf{y})=\frac{\gamma(h_{1}(t_{1},\mathbf{x}%
,\mathbf{y}),h_{2}(t_{1},\mathbf{x},\mathbf{y}))}{\gamma(\mathbf{x}%
,\mathbf{y})}\frac{\partial(h_{1}(t_{1},\mathbf{x},\mathbf{y}),h_{2}%
(t_{1},\mathbf{x},\mathbf{y}))}{\partial(\mathbf{x},\mathbf{y})}%
\]

\end{remark}

\section{Proof of the results \label{markersectionProofs}}

\subsection{Proof of the Proposition \ref{markerPropositionInvariance}
\label{markerSectionProof}}

\begin{proof}
Denote by $g_{1}^{t,t_{0}}(\mathbf{x})$ the evolution map \cite{Arnold},
\cite{ArnoldAvez}, \cite{Aaronson} associated to the Eq.(\ref{30}): if
$\mathbf{x}(t)$ is a solution with $\mathbf{x}(t_{0})=\mathbf{x}_{0}$ then
$\mathbf{x}(t)=g_{1}^{t,~t_{0}}(\mathbf{x}_{0})$. Similarly we consider the
evolution map $(\mathbf{x},\mathbf{y})\rightarrow g^{t,t_{0}}(\mathbf{x}%
,\mathbf{y})$ associated to the \ system of differential equations
Eqs.(\ref{30}, \ref{40}), respectively let $\mathbf{y}\rightarrow
g_{\mathbf{f}}^{t,t_{0}}(\mathbf{y})$ the\emph{ measure preserving evolution
map attached to the equations Eqs.(\ref{50}, \ref{51}). \ }Then the evolution
of the PDF $\rho_{1}$ is given by the following equation%
\begin{equation}
\rho_{1}\left(  t_{1},g_{1}^{t_{1},~t_{0}}(\mathbf{x}_{0})\right)
\ \gamma_{1}\left(  g_{1}^{t_{1},~t_{0}}(\mathbf{x}_{0})\right)
\ \frac{\partial g_{1}^{t_{1},~t_{0}}(\mathbf{x}_{0})}{\partial\mathbf{x}_{0}%
}=\rho_{1}(t_{0},\mathbf{x}_{0})\gamma_{1}(\mathbf{x}_{0}) \label{180}%
\end{equation}
or \ setting $t_{1}=0,~\mathbf{x}_{0}=\mathbf{x}$ and $t_{0}=t$ \ , we obtain
\begin{equation}
\rho_{1}(t,\mathbf{x})=\frac{\ \gamma_{1}\left(  g_{1}^{0,~t}(\mathbf{x}%
)\right)  }{\gamma_{1}(\mathbf{x})}\frac{\partial g_{1}^{0,~t}(\mathbf{x}%
)}{\partial\mathbf{x}}\rho_{1}\left(  0,g_{1}^{0,~t}(\mathbf{x})\right)
\label{190}%
\end{equation}
We obtain the evolution law for the full PDF, if we decompose the map
$(\mathbf{x},\mathbf{y})\rightarrow g^{t,t_{0}}(\mathbf{x},\mathbf{y})$ as
follows
\begin{equation}
(\mathbf{x},\mathbf{y})\rightarrow g^{t_{1},t_{0}}(\mathbf{x},\mathbf{y}%
):=\left(  g_{1}^{t_{1},~t_{0}}(\mathbf{x}),~g_{2}^{t_{1},~t_{0}}%
(\mathbf{x},\mathbf{y})\right)  \label{200}%
\end{equation}
where $g_{1}^{t_{1},~t_{0}}$ is from Eq.(\ref{190}). It follows
\begin{equation}
\rho(t,\mathbf{x},\mathbf{y})=\rho\left(  0,g_{1}^{0,~t}(\mathbf{x}%
),g_{2}^{0,~t}(\mathbf{x},\mathbf{y})\right)  \frac{\gamma_{1}\left(
g_{1}^{0,~t}(\mathbf{x})\right)  }{\gamma_{1}(\mathbf{x})}J(t,\mathbf{x}%
,\mathbf{y}) \label{210}%
\end{equation}
where $J(t,\mathbf{x},\mathbf{y})$ is the Jacobian
\begin{equation}
J(t,\mathbf{x},\mathbf{y})=\frac{\partial\left(  g_{1}^{0,~t}(\mathbf{x}%
),~g_{2}^{0,~t}(\mathbf{x},\mathbf{y})\right)  }{\partial\ \left(
\mathbf{x},\mathbf{y}\right)  }=\frac{\partial\left(  g_{1}^{0,~t}%
(\mathbf{x})\right)  }{\partial\ \left(  \mathbf{x}\right)  }\frac
{\partial\left(  ~g_{2}^{0,~t}(\mathbf{x},\mathbf{y})\right)  }{\partial
\ \left(  \mathbf{y}\right)  } \label{220}%
\end{equation}
From the measure preserving property of the the map $\mathbf{y}\rightarrow
g_{\mathbf{f}}^{t,t_{0}}(\mathbf{y})$ results that for all $\mathbf{x}$ we
have (for details see the Appendix \ref{markerSectAppnedix})
\begin{equation}
\frac{\partial\left(  ~g_{2}^{0,~t}(\mathbf{x},\mathbf{y})\right)  }%
{\partial\ \left(  \mathbf{y}\right)  }=1;\ \forall\mathbf{x}\in\Omega
^{\prime} \label{240}%
\end{equation}
and the Eq.(\ref{210}) simplifies to the form
\begin{equation}
\rho(t,\mathbf{x},\mathbf{y})=\rho\left(  0,g_{1}^{0,~t}(\mathbf{x}%
),g_{2}^{0,~t}(\mathbf{x},\mathbf{y})\right)  \frac{\gamma_{1}\left(
g_{1}^{0,~t}(\mathbf{x})\right)  \ }{\gamma_{1}(\mathbf{x})\ }\frac
{\partial\left(  g_{1}^{0,~t}(\mathbf{x})\right)  }{\partial\ \left(
\mathbf{x}\right)  }\ \label{250}%
\end{equation}
and by using Condition \ref{markerConditionIndepInitCondi} it follows that
\begin{equation}
\rho(t,\mathbf{x},\mathbf{y})=\rho_{1}\left(  0,g_{1}^{0,~t}(\mathbf{x}%
)\right)  \rho_{2}\left(  g_{2}^{0,~t}(\mathbf{x},\mathbf{y})\right)
\frac{\gamma_{1}\left(  g_{1}^{0,~t}(\mathbf{x})\right)  \ }{\gamma
_{1}(\mathbf{x})\ }\frac{\partial\left(  g_{1}^{0,~t}(\mathbf{x})\right)
}{\partial\ \left(  \mathbf{x}\right)  } \label{260}%
\end{equation}
or by using Eqs.(\ref{190}, \ref{260}) we obtain
\begin{equation}
\rho(t,\mathbf{x},\mathbf{y})=\rho_{1}\left(  t,\mathbf{x}\right)  \rho
_{2}\left(  g_{2}^{0,~t}(\mathbf{x},\mathbf{y})\right)  \ \label{270}%
\end{equation}
We compute now $N_{p,q}^{(2)}(t;\rho)$ by using Eqs.(\ref{150}, \ref{270}) by
observing that from Eq.(\ref{240}) results that for all $\forall\mathbf{x}%
\in\Omega^{\prime}$ and any integrable function $F\left(  \mathbf{y}\right)  $
we have%
\begin{equation}
{\int\limits_{\Omega^{\prime\prime}}}dm_{2}(\mathbf{y)}F\left(  g_{2}%
^{0,~t}(\mathbf{x},\mathbf{y})\right)  ={\int\limits_{\Omega^{\prime\prime}}%
}dm_{2}(\mathbf{y)}F\left(  \mathbf{y}\right)  \label{280}%
\end{equation}
Consequently by using Eqs.(\ref{150}, \ref{270}) the rule Eq.(\ref{280}) and
the definition Eq.(\ref{120}) it follows
\begin{align}
N_{p,q}^{(2)}(t;\rho)  &  ={\int\limits_{\Omega^{\prime}}}dm_{1}%
(\mathbf{x)}\left[  \rho_{1}\left(  t,\mathbf{x}\right)  \right]  ^{pq}\left[
{\int\limits_{\Omega^{\prime\prime}}}dm_{2}(\mathbf{y)}\left[  \rho_{2}\left(
g_{2}^{0,~t}(\mathbf{x},\mathbf{y})\right)  \right]  ^{q}\right]
^{p}\label{290}\\
&  ={\int\limits_{\Omega^{\prime}}}dm_{1}(\mathbf{x)}\left[  \rho_{1}\left(
t,\mathbf{x}\right)  \right]  ^{pq}\left[  {\int\limits_{\Omega^{\prime\prime
}}}dm_{2}(\mathbf{y)}\left[  \rho_{2}(\mathbf{y})\right]  ^{q}\right]
^{p}\label{295}\\
&  =N_{q}^{(1)}(t;\rho_{1})\left[  {\int\limits_{\Omega^{\prime\prime}}}%
dm_{2}(\mathbf{y)}\left[  \rho_{2}(\mathbf{y})\right]  ^{q}\right]  ^{p}
\label{300}%
\end{align}

\end{proof}

which completes the proof of Proposition \ref{markerPropositionInvariance}

\subsection{Proof of Proposition \ref{markerPropositionSufficientCondition}}

Denote by $a$ the constant value%
\begin{equation}
a=\left[  \frac{N_{p,q}^{(2)}(t;\rho)}{N_{pq}^{(1)}(t;\rho_{1})}\right]
^{1/p} \label{p1}%
\end{equation}
By setting $t=0$ and from Condition \ref{markerConditionIndepInitCondi} we
obtain
\begin{equation}
a={\int\limits_{\Omega^{\prime\prime}}}dm_{2}(\mathbf{y)}\left\vert \rho
_{2}(\mathbf{y})\right\vert ^{q} \label{p2}%
\end{equation}
From Eq.(\ref{p1}) results%
\begin{equation}
{\int\limits_{\Omega^{\prime}}}dm_{1}(\mathbf{x)}\left\{  a^{p}\left[
\rho_{1}(t,\mathbf{x})\right]  ^{pq}-\left[  {\int\limits_{\Omega
^{\prime\prime}}}dm_{2}(\mathbf{y)}\left\vert \rho(t,\mathbf{x},\mathbf{y}%
)\right\vert ^{q}\right]  ^{p}\right\}  =0 \label{p3}%
\end{equation}
Due to the independence on the measure $dm_{1}$ it follows%
\begin{equation}
a\left[  \rho_{1}(t,\mathbf{x})\right]  ^{q}={\int\limits_{\Omega
^{\prime\prime}}}dm_{2}(\mathbf{y)}\left\vert \rho(t,\mathbf{x},\mathbf{y}%
)\right\vert ^{q} \label{p4}%
\end{equation}
From Eqs.(\ref{p2}, \ref{p4}) we obtain%
\begin{equation}
{\int\limits_{\Omega^{\prime\prime}}}dm_{2}(\mathbf{y)}\left\vert \rho
_{1}(t,\mathbf{x})\rho_{2}(\mathbf{y})\right\vert ^{q}={\int\limits_{\Omega
^{\prime\prime}}}dm_{2}(\mathbf{y)}\left\vert \rho(t,\mathbf{x},\mathbf{y}%
)\right\vert ^{q} \label{p5}%
\end{equation}
Fix for the moment the time $t$ and the variable $\mathbf{x}$ \ and we use the
Lemma \ref{markerLemmaRearangement} from the Appendix, with%
\begin{align*}
F(\mathbf{y})  &  =\rho(t,\mathbf{x},\mathbf{y})\\
G(\mathbf{y})  &  =\rho_{1}(t,\mathbf{x})\rho_{2}(\mathbf{y})
\end{align*}
Results that for all fixed $t,\mathbf{x}$ there exists a measure preserving
map%
\[
\Omega^{\prime\prime}\ni\mathbf{y\rightarrow}T(t,\mathbf{x,y})\in
\Omega^{\prime\prime}%
\]
$\ $such that%
\begin{equation}
\rho(t,\mathbf{x},\mathbf{y)=}\rho_{1}(\mathbf{x})\rho_{2}(T(t,\mathbf{x,y}))
\label{p6}%
\end{equation}
that completes the proof.

\bigskip

\bigskip

\section{Discrete time dynamical
systems.\label{markerSectionDiscreteTimeDinamicalSystem}}

By following the arguments in the proof of Propositions
\ref{markerPropositionInvariance}, \ref{markerPropositionSufficientCondition}
we observe that the conclusions remain valid if \ the dynamical systems are
described by finite difference evolution equations, instead of differential
equations Eqs.(\ref{30}, \ref{40}). We have the following evolutions on
$\Omega^{\prime}\times\Omega^{\prime\prime}$ where again $\Omega^{\prime
\prime}=\mathbb{R}^{N_{2}}$ :
\begin{align}
\mathbf{x}(t+\Delta t)  &  =\mathbf{X}\left(  t,\mathbf{x}(t)\right)
\label{dt1}\\
\mathbf{y}(t+\Delta t)  &  =\mathbf{Y}\left(  t,\mathbf{x}(t),\mathbf{y}%
(t)\right)  \label{dt2}%
\end{align}
where the second map preserves the Lebesgue measure ( it has unit Jacobian)
\[
\frac{\partial\mathbf{Y}(t,\mathbf{x},\mathbf{y})}{\partial\mathbf{y}}=1
\]
and in the case of discrete approximation of real physical systems the maps
are also orientation preserving. In the case when the map $\mathbf{y}%
\rightarrow\mathbf{Y}(t,\mathbf{x},\mathbf{y})$ is the finite time evolution
map of a Hamiltonian system, the matrix%
\[
\left[  \frac{\partial Y_{i}(t,\mathbf{x},\mathbf{y})}{\partial y_{j}}\right]
_{i,j=1,N_{2}}%
\]
is symplectic, but this property is not used in the proof. It is more easily
to construct integrators that preserve the volume in contrast to the
integrators that preserve the Poincar\'{e} invariants. If the PDF of the
distribution of the initial conditions fulfill the Condition
\ref{markerConditionIndepInitCondi} and $\rho(t,\mathbf{x},\mathbf{y})$ is the
joint PDF of the distribution at time $t$ generated this time by maps from
Eqs. Eqs.(\ref{dt1},\ref{dt2}), then \ Propositions
\ref{markerPropositionInvariance}, \ref{markerPropositionSufficientCondition}
are still valid. This is important in the studies when the evolution of the DS
is approximated \ by numerical integrators.

\section{Conclusions.\label{markerSectionConclusions}}

\ In the case of two of interacting dynamical systems, with independent random
initial conditions, when one system is Hamiltonian,it is possible to decide if
there is no back reaction of the Hamiltonian system to the remaining part of
the composed dynamical system. It is sufficient to compute the R\'{e}nyi
entropy and the generalized R\'{e}nyi entropy from the joint PDF at several
values of the parameters that specifies the generalized entropy, at some time
instants as well as different weight function associated to the measure in the
phase space. In the case of absence of back reaction the invariant defined in
Eq.(\ref{160}) does not depend on the parameter $p$, time $t$, the measure
$dm_{1}$.

\begin{acknowledgement}
Authors acknowledges M. van Schoor and D. van Eester of the Royal Military
School, Brussels. Gy${\ddot o}$rgy. Steinbreacher acknowledges J. H. Misguich and X. Garbet from IRFM,
C.E.A, Cadarache, France for useful discussions. Giorgio Sonnino is very grateful to Prof. Pasquale Nardone from the Universit{\'e} Libre de Bruxelles (U.L.B.) for his scientific suggestions.
\end{acknowledgement}

\bigskip

\section{\bigskip Appendix.}

\subsection{ Proof of Eq.(\ref{240})\label{markerSectAppnedix}}

Let $\mathbf{x}_{0}\in\Omega\,_{1}$ and denote by $\mathbf{X}(t,\mathbf{x}%
_{0})$ the unique solution of the Eq.(\ref{30}) with the property
$\mathbf{X}(0,\mathbf{x}_{0})=\mathbf{x}_{0}$. Then the joint solution
$\left(  \mathbf{x}(t),\mathbf{y}(t)\right)  $\ \ of the Eqs.(\ref{30},
\ref{40}) with the initial conditions $\left(  \mathbf{x}(0),\mathbf{y}%
(0)\right)  =\left(  \mathbf{x}_{0},\mathbf{y}_{0}\right)  $ is identical with%
\[
\left(  \mathbf{X}(t,\mathbf{x}_{0}),\mathbf{Y}(t,\mathbf{x}_{0}%
,\mathbf{y}_{0})\right)
\]
where $\mathbf{Y}(t,\mathbf{x}_{0},\mathbf{y}_{0})$ \ is the solution of the
Eqs.(\ref{50}, \ref{51}) with initial condition $\mathbf{Y}(0,\mathbf{x}%
_{0},\mathbf{y}_{0})=\mathbf{y}_{0}$ where we selected $\mathbf{f}%
(t)\equiv\mathbf{Y}(t,\mathbf{x}_{0},\mathbf{y}_{0})$%
\begin{align}
\frac{dY_{j}(t,\mathbf{x}_{0},\mathbf{y}_{0})}{dt}  &  =V_{j}(t,\mathbf{y}%
,\mathbf{Y}(t,\mathbf{x}_{0},\mathbf{y}_{0}));~j=\overline{1,N_{2}}%
\label{310}\\
\mathbf{Y}(t,\mathbf{x}_{0},\mathbf{y}_{0})  &  =\mathbf{y}_{0,} \label{320}%
\end{align}
Consequently the evolution map for the full system Eqs.(\ref{30}, \ref{40} )
is given by
\begin{equation}
\left(  \mathbf{x}_{0},\mathbf{y}_{0}\right)  \rightarrow\left(
\mathbf{X}(t,\mathbf{x}_{0}),\mathbf{Y}(t,\mathbf{x}_{0},\mathbf{y}%
_{0})\right)  =\left(  g_{1}^{t_{1},0}(\mathbf{x}_{0}),g_{2}^{t_{1}%
,0}(\mathbf{x}_{0},\mathbf{y}_{0})\right)  \label{330}%
\end{equation}
\ Let consider $\mathbf{x}_{0}$ fixed. From Eq.(\ref{60}) follows that the
evolution map obtained from Eq.(\ref{51}) , and in particular from
Eqs.(\ref{310})\ preserves the Lebesgue measure $dm_{2}(\mathbf{y})$%
\[
\frac{\partial\left(  ~\mathbf{Y}(t,\mathbf{x}_{0},\mathbf{y}_{0})\right)
}{\partial\ \left(  \mathbf{y}_{0}\right)  }=1;\ \forall\mathbf{x}\in
\Omega_{1}%
\]
which combined with Eq.(\ref{330}) completes the proof.

\subsection{Lemma on rearrangement}

We expose a simplified proof of the following Lemma

\begin{lemma}
\label{markerLemmaRearangement}Suppose that the functions $F(\mathbf{y})$,
$G(\mathbf{y})$ are non negative and in the complex neighborhood of $q_{0}>0$
the functions
\begin{align}
q  &  \rightarrow{\int\limits_{\Omega^{\prime\prime}}}dm_{2}(\mathbf{y)}%
\left[  F(\mathbf{y})\right]  ^{q}\label{r1}\\
q  &  \rightarrow{\int\limits_{\Omega^{\prime\prime}}}dm_{2}(\mathbf{y)}%
\left[  G(\mathbf{y})\right]  ^{q} \label{r2}%
\end{align}
are defined, are analytic and
\begin{equation}
{\int\limits_{\Omega^{\prime\prime}}}dm_{2}(\mathbf{y)}\left[  F(\mathbf{y}%
)\right]  ^{q}={\int\limits_{\Omega^{\prime\prime}}}dm_{2}(\mathbf{y)}\left[
G(\mathbf{y})\right]  ^{q} \label{r2a}%
\end{equation}
The there exists a measurable map $\Omega^{\prime\prime}\ni
\mathbf{y\rightarrow}T(\mathbf{y})\in\Omega^{\prime\prime}$ such that
\begin{equation}
F(\mathbf{y})=G\left(  T\left(  \mathbf{y}\right)  \right)  \label{r3}%
\end{equation}

\end{lemma}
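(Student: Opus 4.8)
The plan is to prove the statement in two stages: first extract from the analytic hypothesis the purely measure-theoretic fact that $F$ and $G$ are \emph{equimeasurable} with respect to $m_2$, and then construct the map $T$ as a fibered rearrangement that carries the level sets of $F$ onto those of $G$. First I would translate the equality of the analytic functions in Eqs.(\ref{r1}, \ref{r2}) into an identity of distribution functions. Introduce the non-increasing, right-continuous distribution functions
\begin{align*}
\lambda_F(s) &= m_2\left(\{\mathbf{y}\in\Omega^{\prime\prime}:F(\mathbf{y})>s\}\right)\\
\lambda_G(s) &= m_2\left(\{\mathbf{y}\in\Omega^{\prime\prime}:G(\mathbf{y})>s\}\right)
\end{align*}
By the layer-cake representation and the substitution $t=s^{q}$ one has, for real $q$ in the domain of analyticity,
\begin{equation*}
\int_{\Omega^{\prime\prime}}\left[F(\mathbf{y})\right]^{q}\,dm_2(\mathbf{y})=q\int_0^{\infty}s^{q-1}\lambda_F(s)\,ds,
\end{equation*}
so that the integrals in Eqs.(\ref{r1}, \ref{r2}) are, up to the factor $q$, the Mellin transforms of $\lambda_F$ and $\lambda_G$. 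Since by hypothesis these two transforms agree on a complex neighborhood of $q_0$ and are analytic, they agree on the entire common strip of analyticity by the identity theorem, and the injectivity of the Mellin transform on this strip then forces $\lambda_F\equiv\lambda_G$. Thus $F$ and $G$ have the same distribution.

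Next I would build $T$ by disintegrating $m_2$ along the level sets. Let $\nu:=F_{\ast}m_2=G_{\ast}m_2$ be the common law of $F$ and $G$ on $[0,\infty)$, equal by the previous step, and disintegrate $m_2=\int m_2^{F,s}\,d\nu(s)=\int m_2^{G,s}\,d\nu(s)$ into conditional probability measures supported on the fibers $\{F=s\}$ and $\{G=s\}$ respectively. For $\nu$-almost every $s$ the two fibers, equipped with their conditional measures, are non-atomic standard probability spaces, hence isomorphic by the isomorphism theorem for Lebesgue spaces; choosing such an isomorphism $T_s:\{F=s\}\to\{G=s\}$ and setting $T(\mathbf{y}):=T_s(\mathbf{y})$ for $\mathbf{y}\in\{F=s\}$ yields $G(T(\mathbf{y}))=s=F(\mathbf{y})$, which is exactly Eq.(\ref{r3}). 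That $T$ preserves $m_2$ follows for an arbitrary bounded measurable $\phi$ by integrating the fiberwise identity $\int\phi(T_s)\,dm_2^{F,s}=\int\phi\,dm_2^{G,s}$ against $\nu$, which reconstitutes $\int_{\Omega^{\prime\prime}}(\phi\circ T)\,dm_2=\int_{\Omega^{\prime\prime}}\phi\,dm_2$.

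The hard part — and the point at which the argument becomes heuristic — is the simultaneous and measurable choice of the fiber isomorphisms $T_s$, together with the genuine non-atomicity of the conditional measures used above: where $F$ or $G$ is constant on a set of positive $m_2$-measure the law $\nu$ acquires an atom and the naive fiberwise isomorphism must be handled separately, while even in the non-atomic case one must invoke a measurable selection theorem to patch $s\mapsto T_s$ into a globally measurable $T$. I expect to circumvent these difficulties by first proving the statement for simple, finitely-valued $F,G$, where the rearrangement is an explicit measure-matching between finitely many level sets of equal measure, and then passing to the limit along a refining quantization of the range. Controlling this limit rigorously is precisely the convergence of the finite-dimensional approximations flagged as an unproved assumption in the introduction, and is what keeps the present proof at the level of a simplified, heuristic argument.
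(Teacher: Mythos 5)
Your proposal is correct in outline but takes a genuinely different route from the paper's. The paper discretizes the phase space $\Omega^{\prime\prime}$ into congruent hypercubes, replaces Eq.(\ref{r2a}) by the finite sum Eq.(\ref{r4}), extracts a permutation matching the sampled values $F_{k}$ and $G_{k}$ by letting $q\rightarrow\infty$ and peeling off successive maxima, realizes that permutation as a piecewise translation $T_{n}$, and passes (heuristically) to the limit $T_{n}\rightarrow T$. You instead split the problem into an analytic half and a measure-theoretic half: the layer-cake/Mellin-transform argument reduces Eq.(\ref{r2a}) to the equimeasurability $\lambda_{F}\equiv\lambda_{G}$ --- a step that is essentially fully rigorous, and arguably cleaner than the paper's $q\rightarrow\infty$ matching, which is itself only approximate because Eq.(\ref{r4}) holds only up to quadrature error --- and then invoke disintegration over the common law $\nu$ together with the isomorphism theorem for Lebesgue spaces to build $T$ fiberwise, which is essentially Ryff's rearrangement theorem. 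What the paper's construction buys is explicitness: the approximants $T_{n}$ are concrete piecewise translations, manifestly measure preserving. What yours buys is a sharper localization of the remaining gap: everything is rigorous except the measurable selection of the fiber isomorphisms $s\mapsto T_{s}$ and the treatment of atoms of $\nu$, which is the exact analogue of the unproved convergence flagged in the paper's introduction; you also verify explicitly that $T$ preserves $m_{2}$, a property the lemma's statement omits but which Proposition \ref{markerPropositionSufficientCondition} actually requires. Two small points to tighten in your first step: the identity theorem only propagates the equality to the connected component of the common domain of analyticity containing the neighborhood of $q_{0}$, and since $m_{2}$ may be an infinite measure you should note that $\lambda_{F}(s)$ is finite for every $s>0$ by the Chebyshev inequality, so the Mellin integrals are well defined; neither point damages the argument, and your claim that the conditional fiber measures are non-atomic is not needed (and not always true) --- isomorphism of the fibers as measure spaces is all the construction uses.
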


\begin{proof}
We will approximate the map $T(\mathbf{y})$ by constructing a sequence of maps
$\mathbf{y\rightarrow}T_{n}(\mathbf{y})$. In this end \ first we select an
increasing sequence of finite hypercube subsets $\Omega_{n}\subset
\Omega^{\prime\prime}$, and in $\Omega_{n}$ we select a regular covering with
hypercube subdomains $\Omega_{n,k}$ , with $\ 1\leq k\leq M_{n}$ such that
there exists a vector $\mathbf{A}_{k,j}$ \ \ that translates $\Omega_{n,k}$ to
$\Omega_{n,j}$ and
\begin{align*}
\Omega_{n,k}  &  \subset\Omega_{n}\subset\Omega^{\prime\prime}\\
\cup_{k=1}^{M_{n}}\Omega_{n,k}  &  =\Omega_{n}\\
m_{2}(\Omega_{n,k})  &  =\frac{m_{2}(\Omega_{n})}{M_{n}}%
\end{align*}
\ Denote by $F_{k}=F(\mathbf{y}_{k})$ respectively by $G_{k}=G(\mathbf{y}%
_{k})$ the values of the functions \ $F(\mathbf{y}_{k}),G\left(
\mathbf{y}_{k}\right)  $ in some points \ \ $\mathbf{y}_{k}\in\Omega_{n,k}$.
The Eq.(\ref{r2a}) is approximated as follows%
\begin{equation}
\sum\limits_{k=1}^{M_{n}}\left(  F_{k}^{q}-G_{k}^{q}\right)  =0 \label{r4}%
\end{equation}
From Eq.(\ref{r4}) we prove that exists a permutation of the indices
$k\rightarrow P_{n}(k)$ such that
\begin{equation}
F_{k}=G_{P_{n}(k)} \label{r5}%
\end{equation}
Without losing generality, we may assume that $F_{k}>1$, \ $G_{k}>1\,$,
otherwise we multiply Eq.(\ref{r4}) with a suitable constant $c^{q}$.
Considering $q\rightarrow\infty$ \ in Eq.(\ref{r4})and denoting $k_{1}%
=\arg~\max~F_{k}$, observe that there exists an $k_{1}^{\prime}$ such that
$F_{k_{1}}=G_{k_{1}^{\prime}}$. Removing this term from the sum in
Eq.(\ref{r4}) we find the next value $\ k_{2}=\arg~\max\ F_{k};k\neq k_{1}$
and the corresponding value $k_{2}^{\prime}$ such that $F_{k_{2}}%
=G_{k_{2}^{\prime}}$ . Continuing the process we generate a permutation
$k_{j}\rightarrow k_{j}^{\prime}=P_{n}(k_{j})~$ such that $F_{k_{j}}%
=G_{k_{j}^{\prime}}$ . From permutation $P_{n}$ we generate the map
$\mathbf{y\rightarrow}T_{n}(\mathbf{y})$ such that to any $\mathbf{y\in}%
\Omega_{n,k}$ we associate $T_{n}(\mathbf{y)=y+A}_{k,P_{n}(k)}\in
\Omega_{n,P_{n}(k)}$. \ By increasing the number of subdomains $\Omega_{n,k}$
contained in $\Omega_{n}$ and increasing $\Omega_{n}$ \ such that $\cup
_{n}^{\infty}\Omega_{n}=\Omega^{\prime\prime}$ we obtain a sequence of maps
$T_{n}$ whose limit is the requested map $T$, that completes the proof.
\end{proof}

\end{document}